\newcommand{\comment}[1]{}
\newenvironment{proof}{\paragraph{\bf Proof:}}{\hspace*{\fill}\(\Box\)}
\newtheorem{theorem}{Theorem}
\newtheorem{claim}{Claim}
\newtheorem{lemma}{Lemma}
\def\noflash#1{\setbox0=\hbox{#1}\hbox to 1\wd0{\hfill}}
\newcommand{\scriptf}{\mathcal{F}}
\newcommand{\scriptv}{\mathcal{V}}
\begin{document}
\title{Byzantine Broadcast\\ Under a {\em Selective Broadcast} Model\\ for Single-hop Wireless Networks\footnote{\normalsize This research is supported in part by National Science Foundation award CNS 1059540. Any opinions, findings, and conclusions or recommendations expressed here are those of the authors and do not necessarily reflect the views of the funding agencies or the U.S. government.}}

\author{Lewis Tseng \hspace*{1in} Nitin Vaidya\\
\normalsize
 University of Illinois at Urbana-Champaign\\ \normalsize Email: \{ltseng3, nhv\}@illinois.edu}

\date{May 2012}
\maketitle

\begin{abstract}
\normalsize

This paper explores an old problem, {\em Byzantine fault-tolerant Broadcast} (BB), under a new model, {\em selective broadcast model}. The new model ``interpolates'' between the two traditional models in the literature. In particular, it allows fault-free nodes to exploit the benefits of a broadcast channel (a feature from reliable broadcast model) and allows  faulty nodes to send mismatching messages to different neighbors (a feature from point-to-point model) simultaneously. The {\em selective broadcast} model is motivated by the potential for {\em directional} transmissions on a wireless channel.

 We provide a collection of results for a single-hop wireless network under the new model. First, we present an algorithm for {\em Multi-Valued} BB that is order-optimal in bit complexity. Then, we provide an algorithm that is designed to achieve BB efficiently in terms of message complexity. Third, we determine some lower bounds on both bit and message complexities of BB problems in the {\em selective broadcast model}. Finally, we present a conjecture on an ``exact'' lower bound on the bit complexity of BB under the {\em selective broadcast} model.

\end{abstract}

\thispagestyle{empty}
 
 
\setcounter{page}{1}

\section{Introduction}
\label{sec:intro}

In this paper, we address Byzantine fault-tolerant broadcast (or {\em Byzantine Broadcast} for short) under a new model of a wireless broadcast channel.
Byzantine Broadcast (BB) is a fundamental problem in distributed computing \cite{psl_BG_1982}. Consider a system of $n$ nodes, namely $p_1, ..., p_n$, of which at most $t$ nodes may be faulty. We assume that node $p_1$ is the
{\em source} for the Byzantine Broadcast (BB); the remaining $n-1$ nodes
will be referred as {\em peers}. Byzantine Broadcast must
satisfy the following three properties, assuming that source $p_1$'s input is $x$:

\begin{itemize}
\item \textbf{Termination}: every fault-free peer $p_i$ eventually decides on an output value $y_i$.

\item \textbf{Consistency}: the output values of all the fault-free peers are equal, i.e., there exists $y$ such that, for every fault-free peer $p_i$, $y_i = y$.

\item \textbf{Validity}: if the source node $p_1$ is fault-free, then the agreed value must be identical to $p_1$'s input value, i.e., $y = x$.

\end{itemize}

We assume that the input $x$ at source $p_1$ is $L$ bits long. The case when $L=1$ will be referred to as {\em Binary} BB, with the case when $L>1$ being referred to as {\em Multi-Valued} BB.

We are interested in two measures of complexity of BB algorithms.
\begin{itemize}
\item {\bf Message complexity}: Message complexity of an algorithm is defined as the maximum (i.e., worst-case) number of messages transmitted by all the nodes following the specification of the algorithm over all permissible executions.

\item {\bf Bit complexity}: Bit complexity of an algorithm is defined as the maximum number of bits transmitted by all the nodes following the specification of the algorithm over all permissible executions.
\end{itemize}

The Byzantine Broadcast problem has been considered under different
models of the communication network:

\begin{itemize}
\item {\bf Point-to-point model} (e.g., \cite{psl_BG_1982, Modular_n2_Coan_1992, gradecast_benor_2010, liang_BC_L-bit_PODC2011}): Nodes are connected via pairwise private channels,
with the network being modeled as a directed graph. Thus, if a channel $(p_i,p_j)$ exists
then node $p_i$ can transmit information to $p_j$ -- the channel is private in the sense
that no other node can overhear this transmission.

\item {\bf Reliable Broadcast model} (e.g., \cite{Koo_radio_byzantine, Vartika_radio_byzantine_2005, 2cast_journal}):
There are two models assuming the existence of reliable broadcast channels. In both models, each node has a certain set of {\em neighbors} such that a message sent by the node on the broadcast channel is received (reliably) by all its neighbors. The first model corresponds to a radio network wherein nodes within a certain distance of each node are considered its neighbors (e.g., \cite{Koo_radio_byzantine, Vartika_radio_byzantine_2005}). \cite{2cast_journal, Sen_PODC12_coloring} considers a model wherein every subset of three nodes shares a reliable broadcast channel. Importantly, in both models, the broadcast property holds for transmissions from faulty nodes as well. Thus, each transmission of a (faulty or fault-free) node is received by all its neighbors.

\end{itemize}

In this paper, we consider a new model which ``interpolates'' between the above two models. The motivation behind the new model is to understand the impact of the network assumption on performance metrics of interest. Observe that:

\begin{itemize}
\item In the reliable broadcast model, performance can be improved by exploiting the broadcast channel, since a single transmission can be received by multiple nodes. However, this model is somewhat optimistic in the sense that it does \underline{not} allow for the possibility that a faulty node may send different messages to different nodes. For instance, in a wireless setting, it
is conceivable that a faulty node may use beam-forming (or directional) antennas to send different messages to its neighbors.

\item While the point-to-point network model allows the faulty nodes to send different messages to different neighbors, it does \underline{not} provide the benefits of broadcast to the fault-free nodes.
\end{itemize}

We now introduce our {\em selective broadcast} model. To simplify the discussion and analysis, we assume a ``single-hop'' broadcast channel, as elaborated in the model below. However, the model can be extended to ``multi-hop'' networks as well.

\paragraph{Selective Broadcast Model:}
The system is assumed to be synchronous. We assume that all the $n$ nodes share a channel on which broadcasts and unicasts can both be performed. A fault-free node can broadcast its messages to all the nodes. However, a faulty node can simultaneously (in time) send different messages to different nodes to maximize the impact of its misbehavior. 

Faulty nodes cannot cause collisions on the channel. All transmissions are assumed to be reliable. While multiple nodes may transmit messages on the shared channel within the same ``round'', for the purpose of the analysis of bit and message complexity, it is adequate to assume that these transmissions are performed in parallel. (In real networks, the transmissions will be serialized by a medium access mechanism -- this detail is ignored in our work, since we
focus on bit and message complexity, not time complexity).

It is assumed that each node can correctly identify the transmitter of each message on the {\em selective broadcast} channel.

\paragraph{Failure Model:}

The Byzantine adversary has complete knowledge of the algorithm, and the source's input value. The adversary can {\em compromise} up to $t < n/3$ nodes over the entire execution of the algorithm. These compromised nodes are said to be {\em faulty}. The faulty node can arbitrarily deviate from the algorithm specification, including sending mismatching messages to other nodes.

\section{Upper Bound on Bit Complexity - Multi-Valued BB}
\label{sec:alg-L}

In this section, we present a Byzantine Broadcast algorithm for 
an $L$-bit input. The algorithm is motivated by past algorithms
that utilize ``dispute control'' or similar structures \cite{Fitzi_multi-value_BA_PODC, liang_BC_L-bit_PODC2011}.  For a suitably chosen integer $D$, the algorithm consists of $L/D$ ``generations'', with Byzantine Broadcast of $D$ bits of the input being achieved in each generation $g$ ($1 \leq g \leq L/D$). For simplicity, we assume that $L/D$ is
an integer. Input $x$ at source node $p_1$ is viewed as the tuple
\[ x(1), x(2), \cdots, x(L/D), \]
where each $x(g)$ consists of $D$ bits. In each generation $g$, each fault-free node $p_i$ obtains output $y_i(g)$, and its $L$-bit output for all the generation together is obtained as the tuple
\[ y_i(1), y_i(2), \cdots, y_i(L/D)\]


Our {\em Byzantine Broadcast} algorithm has three
phases: {\em Detectable Broadcast}, {\em Detection Dissemination}
and {\em Dispute Control}.

\bigskip

\vspace*{8pt}\hrule
\noindent {\bf Byzantine Broadcast using Dispute Control}
\vspace*{4pt}\hrule

~

Perform the following steps in the $g$-th generation, $g=1,\cdots,L/D$. 
\begin{itemize}
\item {\bf Detectable Broadcast of $x(g)$}:
By the end of the {\em Detectable Broadcast} phase each fault-free node $p_i$
receives a value, denoted as $z_i$, such that one of the following two conditions is true:
\begin{list}{}{}
\item[(i)] at least one fault-free node detects misbehavior by some faulty node(s) in the network, without necessarily identifying the faulty node(s), or

\item[(ii)] no fault-free node detects any misbehavior, and for all fault-free
	peers $p_i$ and $p_j$, we have 
		$z_i=z_j$, and
	additionally, if $p_1$ is fault-free then $z_i=z_j=x(g)$.
\end{list}

\item {\bf Detection Dissemination:}
Each node performs a Byzantine broadcast (BB) of a single bit, indicating
whether it detected any misbehavior during {\em Detectable Broadcast} phase or not.
The 1-bit BB is performed using any previously proposed BB algorithm (e.g., \cite{psl_BG_1982, Modular_n2_Coan_1992, gradecast_benor_2010}) for the point-to-point model (note that algorithms designed for the point-to-point model
can be used under our {\em selective broadcast} model as well).

If no node announces that it detected misbehavior, then each fault-free node $p_i$
sets output $y_i(g)$ equal to $z_i$ received in the {\em Detectable Broadcast} phase,
and the {\em Dispute Control} phase is \underline{not} performed in the $g$-th generation.

\item {\bf Dispute Control:} If any node indicates,
 in the {\em Detection Dissemination} phase, that misbehavior has been detected,
 then additional steps are taken to learn
new information regarding the potential identity of the faulty nodes. In particular,
a new pair of nodes is found ``\underline{in dispute}" such that at least one node in this
pair is guaranteed to be faulty. Two fault-free nodes are never
found {\em in dispute} with each other.
The {\em Dispute Control} phase is performed using a
 BB algorithm previously proposed for the point-to-point model
(similar to the {\em Detection Dissemination} phase), and as a by-product
of {\em Dispute Control}, Byzantine Broadcast of $x(g)$ is also achieved.
For brevity, we omit the details of the {\em Dispute Control} phase; similar
dispute control mechanisms have been included in prior work as well \cite{Fitzi_multi-value_BA_PODC,liang_BC_L-bit_PODC2011}.

For future reference, note that, any node that is found {\em in dispute}
with more than $t$ other nodes must necessarily be faulty itself. Such a
node is then essentially excluded from the future generations of the algorithm.
If the source node $p_1$ is thus identified as faulty, then the algorithm
terminates with the nodes agreeing on a default value for all future
generations.
\end{itemize}

\hrule

~

As shown in prior work using {\em Dispute Control}, the bit complexity of such algorithms is dominated by the first phase -- {\em Detectable Broadcast} in our case -- when the input size $L$ is sufficiently large. For brevity, we will omit the proof of this claim; the reader is referred to prior work (e.g., \cite{liang_BC_L-bit_PODC2011}) for examples of similar algorithms. There are two key reasons for this outcome:
\begin{itemize}
\item Although the 1-bit BB in the {\em Detection Dissemination} phase is performed using an expensive (existing) algorithm, the cost of these disseminations is amortized over a large number of bits, specifically $D$, in the input for each generation.

\item Similarly, the {\em Dispute Control} phase is also expensive. However, it turns out that this phase is performed only a finite number of times (specifically, at most $t(t+1)$ times). Thus, its amortized cost over a large number of generations (i.e., large $L/D$) is small.
\end{itemize}
In the rest of this section, we present and analyze a {\em Detectable Broadcast}
algorithm for $D$ bits. 

\subsection{$D$-Bit Detectable Broadcast}

The {\em Detectable Broadcast} phase makes use of an error detection code. Therefore,
we first describe the code and its parameters.

\paragraph{Error Detection Code:}

With a suitable choice of parameter $c$, we will use a $(n,n-2t)$ Reed-Solomon code over Galois Field GF($2^c$). In particular, $c$ is chosen large enough such that $n \leq 2^c - 1$. The $D$-bit value to be agreed on in each generation is viewed as consisting of $n-2t$ {\em data} symbols from $GF(2^c)$. Thus,
each of these symbols can be represented with $c$ bits, and therefore, $D=c(n-2t)$. Given the $(n-2t)$ data symbols corresponding to a certain $D$-bit
value, $n$  ``\underline{coded}'' symbols in the corresponding codeword are obtained as linear independent combinations of the $(n-2t)$ data symbols over $GF(2^c)$. The code specification is part of the specification of the {\em Detectable Broadcast} algorithm.  The $(n,n-2t)$ Reed-Solomon code has the following useful property: Any $n-2t$ (coded) symbols in a codeword can be used to compute the corresponding $n-2t$ data symbols, and therefore, the corresponding $D$-bit value. We summarize the relationships between the code parameters:
\begin{itemize}
\item $n$ coded symbols in each codeword, corresponding to $(n-2t)$ data symbols,
\item $n \leq 2^c-1$, and $D=c(n-2t)$
\end{itemize}
This implies that $n \leq 2^{D/(n-2t)}-1$, and $D\geq (n-2t)\log_2(n+1)$.
Thus, we need $D = \Omega(n \log n)$.


\paragraph{Detectable Broadcast:}

Algorithm 1 below specifies execution of {\em Detectable Broadcast} in the $g$-th generation of the {\em Byzantine Broadcast} algorithm described above. Recall that $x(g)$ is the input for source $p_1$ in the $g$-th generation.

\vspace*{8pt}\hrule
\noindent {\bf Algorithm 1: Detectable Broadcast for the $g$-th Generation}
\vspace*{4pt}\hrule

\begin{enumerate}
\item
\label{step_1} Source $p_1$ transmits
 $x(g)$ on the {\em selective broadcast} channel.
$x(g)$ is viewed as a vector of $n-2t$ {\em data} symbols, each
consisting of $c=D/(n-2t)$ bits each.

\item Peer $p_i$ ($2\leq i\leq n$) performs the following steps:


\begin{enumerate}
\item \label{step_2_transmit} If $p_i$ is in dispute with the source, stay silent.

Else, encode the $n-2t$ data symbols received in step \ref{step_1} into a codeword consisting of $n$ {\em coded} symbols using the aforementioned coding scheme. Denote the resulting codeword obtained at peer $i$ as $s_{i} = s_i[1], s_i[2],\cdots, s_i[n]$. Transmit $s_i[i]$ on the {\em selective broadcast} channel.
\item \label{step_2}
Node $p_i$ ignores symbols received from peers that it has been
previously
{\em in dispute} with. For each peer $p_j$ that node $p_i$
is {\em not} in dispute with:
let the symbol received from peer $p_j$ be denoted as $r_i[j]$. Also,
if $p_i$ is {\em not in dispute} with $p_1$, then let $r_i[1]=s_i[1]$. 
For all nodes $p_k$ such that $p_i$ is in dispute with $p_k$ ($1\leq k\leq n$),
define $r_i[k]=\perp$ ($\perp$ denotes a distinguished {\em null} symbol).
\end{enumerate}

At the end of step \ref{step_2}, every peer $p_i$ has one
non-null symbol $r_i[j]$
corresponding to each node $p_j$ ($1\leq j\leq n$) that
has not yet been dispute with either $p_i$ or $p_1$.
Since $p_1$ and $p_i$ can each be in dispute with at most $t$ nodes
(otherwise they would have already been identified as faulty), 
node $p_i$ has at least $n-2t$ non-null symbols in the $r_i$ vector.

\item \label{step_3} Peer $p_i$ ($2 \leq i \leq n$) performs the following steps:


\begin{enumerate}
\item Find the solution for each subset of $n-2t$ non-null coded symbols in the
$r_i$ vector received above in step \ref{step_2} -- the solution consists of
$n-2t$ data symbols that correspond to the $n-2t$ coded symbols. 
\item If the solutions to all these subsets of size $n-2t$ is not unique, then $p_i$ has detected faulty behavior by some peer. In this case, $z_i$ is set equal to some default value.

Else, $z_i$ is set equal to the unique solution corresponding to any of the $n-2t$ non-null symbols received in step \ref{step_2}. The solution consists
of $n-2t$ symbols, which correspond to $c(n-2t)=D$ bits.

\end{enumerate}

\comment{
Else, $p$ performs the following steps:

\begin{enumerate}
\item For $2\leq j\leq n$, if $p_i$ is not in dispute with $p_j$ and $p_j$ is not in dispute with the source, then compare $r_i[j]$ and $s_i[j]$. Any other messages are simply ignored.
\item
If all comparisons result in a match,
then $z_i$ is set equal to the $D$ bits corresponding to the $n-2t$ symbols
received in step \ref{step_1}. 

Else, peer $p_i$ has detected faulty behavior by some node. In this
case, $z_i$ is set equal to some default value.
\end{enumerate}
}
\end{enumerate}
\hrule

~

The bit complexity of {\em Detectable Broadcast} algorithm can be further reduced by using a more efficient coding scheme. For lack of space, we omit the discussion here.

\paragraph{Correctness of Detectable Broadcast Algorithm}

\begin{lemma}
\label{lemma:detectable}
By the end of Detectable Broadcast in $g$-th generation,
the following conditions hold:
\begin{itemize}
\item either at least one fault-free node
detects misbehavior by some faulty nodes,
\item or for all fault-free peers $p_i$ and $p_j$,
$z_i=z_j$, and additionally, if $p_1$ is fault-free then
$z_i=z_j=x(g)$. 
\end{itemize}
\end{lemma}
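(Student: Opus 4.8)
The plan is to prove the lemma in contrapositive form: assuming that \emph{no} fault-free node detects misbehavior during Detectable Broadcast, I will show that all fault-free peers agree on their $z_i$, and that this common value equals $x(g)$ whenever $p_1$ is fault-free. Under this assumption every fault-free peer $p_i$ reaches the ``Else'' branch of step 3(b), so all of its size-$(n-2t)$ subsets of non-null symbols decode to a single value, which it adopts as $z_i$. Two structural facts drive the argument: (a) the $(n,n-2t)$ Reed-Solomon property that any $n-2t$ coded symbols, \emph{together with their positions}, determine a unique data vector, where by the model each $p_i$ can identify every transmitter and hence every position; and (b) a fault-free node broadcasts, so when a fault-free peer $p_k$ transmits $s_k[k]$ every fault-free peer receives the same value, whereas the source and faulty peers may send mismatching symbols under selective broadcast. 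I would split into the two cases below, and I expect the source-faulty case to be the crux.

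\textbf{Source $p_1$ fault-free.} First I would establish validity. Since $p_1$ broadcasts $x(g)$ to all, every fault-free peer computes the same correct codeword $s$. For a fault-free $p_i$ the non-null positions include all fault-free nodes (fault-free nodes are never mutually in dispute, and neither $p_1$ nor $p_i$ disputes a fault-free node), giving at least $n-t \ge n-2t$ correct symbols lying on $s$. Hence some size-$(n-2t)$ subset is entirely correct and decodes to $x(g)$; because $p_i$ did not detect, \emph{all} its subsets decode to one value, so $z_i=x(g)$. This holds for every fault-free peer, yielding both consistency and validity. As a sanity check one sees why no fault-free peer could silently accept a corrupted symbol: an off-codeword symbol at position $k$ together with $n-2t-1$ correct symbols decodes to a vector differing from $x(g)$ at position $k$, breaking uniqueness and triggering detection.

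\textbf{Source $p_1$ faulty.} Only consistency is required. Fix fault-free peers $p_i$ and $p_j$; I would exhibit a common size-$(n-2t)$ subset $S$ of positions on which their $r$-vectors are both non-null and identical. Take $S$ among the fault-free peers not in dispute with $p_1$: there are at least $(n-t)-t=n-2t$ such peers, each of whom transmitted (a peer in dispute with the source stays silent) and, being fault-free, broadcasts the same symbol $s_k[k]$ to both $p_i$ and $p_j$; moreover neither $p_i$ nor $p_j$ disputes a fault-free node, so these positions are non-null in both vectors, and $r_i[k]=r_j[k]$ for all $k\in S$. Since neither peer detected, subset $S$ decodes to $z_i$ at $p_i$ and to $z_j$ at $p_j$; but decoding is a deterministic function of the (position, value) pairs in $S$, which coincide, so $z_i=z_j$. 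As $p_i,p_j$ were arbitrary, all fault-free $z$-values are equal.

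The hard part will be the source-faulty consistency step, and it hinges precisely on fact (b): the only positions at which two fault-free peers' vectors may disagree are the source position (each uses its own $s_i[1]$) and the faulty-peer positions (which may differ under selective broadcast), at most $t$ positions in all, with the broadcast property pinning down agreement everywhere else. The count $n-2t$ of shared, agreeing fault-free positions---exactly the Reed-Solomon decoding dimension---is what makes a common subset $S$ available, and this is where $t<n/3$ (in particular $n-2t\ge 1$) enters. The remaining work is bookkeeping: confirming that ``non-null'' means ``transmitted and undisputed,'' so that source-disputed (silent) peers contribute $\perp$ and enter no subset, and that a fault-free peer's self-symbol $s_i[i]$ agrees with the value others receive from its broadcast.
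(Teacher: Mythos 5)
Your proof is correct and takes essentially the same approach as the paper's: a case analysis on the source (the paper splits on whether the source misbehaves, you on whether it is faulty, a minor variation), with consistency in the bad-source case obtained from the $n-2t$ identical symbols that fault-free peers not in dispute with the source broadcast to everyone, and validity in the good-source case from the at least $n-t$ correct symbols plus Reed--Solomon decoding uniqueness. Your write-up in fact fills in the decoding-uniqueness and common-subset details that the paper's proof leaves as ``easy to see.''
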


\begin{proof}
In the {\em Detectable Broadcast} algorithm, the following misbehaviors
are possible:
Source $p_1$ may misbehave in step \ref{step_1}
by transmitting different $D$-bit values (represented as $n-2t$ symbols)
to at least two different fault-free peers that are both not in dispute with the source. A peer node may misbehave
by transmitting incorrect symbols to some fault-free peer(s) in step \ref{step_2}.

Consider two cases in the $g$-th generation:
\begin{itemize}
\item {\em Source does not misbehave: }: Since there are
at least $n-t$ fault-free nodes, and fault-free nodes are never in
dispute with each other, it is easy to see that, for each pair
of fault-free peers $p_i$ and $p_j$, the vectors $r_i$ and $r_j$
will include at least $n-t$ identical and correct coded symbols
corresponding to the data symbols sent by $p_1$ in step \ref{step_1}.
Then it should be easy to see that for each pair of fault-free peers
$p_i$ and $p_j$, either (i) at least one of them will
detect misbehavior by some node, or (ii) $z_i=z_j=x(g)$.

\comment{
  In this case, 
in step \ref{step_1}, all fault-free peers receive identical $n-2t$ symbols
representing the $D$-bit value $x(g)$.
Consider two sub-cases:
\begin{itemize}
\item No peer misbehaves:
Then it is easy to see that
each fault-free peer $p_i$ will not detect any mismatch in step \ref{step_3},
and therefore set $z_i=x(g)$. 
\item A peer misbehaves:
If some peer $p_k$ misbehaves by sending an incorrect symbol to fault-free
peer $p_i$, then $p_i$ will detect that $s_i[k] \neq r_i[k]$, and thus detect
misbehavior. Note that node $p_i$ cannot conclude that $p_k$ is the
faulty node, since it is possible that the source sent incorrect message
to $p_k$ in the previous step. 
\end{itemize}
}

\item {\em Source node misbehaves:} In this case, the source node is in dispute with at most $t$ nodes (otherwise, it would have been identified as faulty already). Thus, the source node is not in dispute with at least $(n-1)-t$ peers.
Of these, $n-1-t$ peers, at least $(n-1-t)-(t-1)=n-2t$ peers are fault-free.
Since these fault-free peers cannot in dispute with each other, they will
collectively transmit at least identical $n-2t$ symbols to all the nodes.
Therefore, all the fault-free peers will share at least $n-2t$
symbols in common in their $r$ vectors.
Then it should be easy to see that for each pair of fault-free peers
$p_i$ and $p_j$, either (i) at least one of them will
detect misbehavior by some node, or (ii) $z_i=z_j$.

\comment{
 The following argument relies on the facts that no two fault-free nodes are found in dispute and there are at most $t$ faults in the system. Suppose that the source is in dispute with at most $t$ peers; otherwise, the source would be identified as faulty by all fault-free nodes, and thus, BB becomes trivial.  In this case, the source nodes sends mismatching data symbols to at least two fault-free peers that are not in dispute with it. Consider two sub-cases with respect to a fault-free peer $p_i$:

\begin{itemize}
\item If $p_i$ is not in dispute with the source: At most $t$ peers are in dispute with the source. Denote the set of all these $t$ peers and the source as $\scriptv$. Then, there are at most $t-1$ other peers that are in dispute with $p_i$, since at least one node from $\scriptv$ is faulty.

\item If $p_i$ is in dispute with the source: At most $t$ peers (including $p_i$) are in dispute with the source. Then, there are at most $t-1$ other peers that are in dispute with $p_i$, since $p_i$ is fault-free and at most $t$ nodes would be in dispute with a fault-free node.
\end{itemize}

Therefore, in both cases, $p_i$ received at least $n-2t$ identical coded symbols in step \ref{step_2} (from the fault-free peers that are not in dispute with both the source and $p_i$). ========= Is it obvious that these coded symbols are identical? Or do I need to explain? ==========Since the Reed-Solomon code is of dimension $n-2t$, if none of the fault-free peers detect any mismatch, then nodes that are not in dispute with the source must have all received identical values from $p_1$ in step \ref{step_1}, and nodes that are in dispute with the source must have received correct symbols from all the peers in step \ref{step_2}. This leads to a contradiction.

}

\end{itemize}
\end{proof}

Lemma \ref{lemma:detectable} together with the correctness of {\em Detection Dissemination} and {\em Dispute Control} (similar to the proofs in \cite{liang_BC_L-bit_PODC2011}) proves the correctness of the algorithm presented in this section.


\subsection{Bit Complexity}

Source node $p_1$ transmits $D$ bits in step \ref{step_1}.
In step \ref{step_2_transmit}, at most $n-1$ peers
each transmit a coded symbol
consisting of $D/(n-2t)$ bits, for a total cost of
$(n-1)D/(n-2t)$ bits in step \ref{step_2_transmit}.
Thus, the worst-case cost of a single instance of {\em Detectable Broadcast}
(in bits)
is
\[
D + \frac{(n-1)D}{n-2t}
\]
Thus, the total cost of {\em Detectable Broadcast} over all the $L/D$
generations required to perform Byzantine Broadcast of the $L$-bit input
at node $p_1$ is given by
\[
L + \frac{(n-1)L}{n-2t} ~ = ~ L \, \frac{2n-2t-1}{n-2t}
\]

As noted previously, with the {\em dispute control} framework, when $L$
is large, the bit complexity of Byzantine Broadcast of $L$-bits
is dominated by the bit complexity of {\em Detectable Broadcast}. In
particular, it can be shown that if we use the Modular Algorithm \cite{Modular_n2_Coan_1992} to perform 1-bit BB in {\em Detection Dissemination} and {\em Dispute Control} phase, then
the communication cost (in bits) of the proposed {\em Byzantine Broadcast}
algorithm is
\[
L \, \frac{2n-2t-1}{n-2t} ~+~ O(n^4) L^{0.5}
\]

When $L$ is large, the first term dominates the above cost, and the
bit complexity becomes $O(L)$. By assumption, $n\geq 3t+1$, and therefore,
\[
2 < \frac{2n-2t-1}{n-2t} < 4
\]

\section{Upper Bound on Message Complexity}
\label{sec:alg-tlogt}

In this section, we briefly describe a Byzantine Broadcast algorithm, named Algorithm 2, with message complexity $O(t \log t)$ under the {\em selective broadcast} model. This algorithm is
derived from the {\em Modular Algorithm} proposed by Coan and Welch \cite{Modular_n2_Coan_1992}, which achieves optimal message complexity of $O(n^2)$ in the point-to-point model. Algorithm 2 has the same structure as the {\em Modular Algorithm} except for the following modifications:

\begin{itemize}
\item In the {\em Modular Algorithm}, a node may often send identical messages to multiple neighbors. In our algorithm, such transmissions are replaced by {\bf one} broadcast message over the {\em selective broadcast} channel.


\item The {\em Modular Algorithm} is performed by all the $n$ nodes. In our case, we use a modified version of that algorithm to achieve Byzantine Broadcast among $3t+1$ nodes. After that, any $2t+1$ of these $3t+1$ nodes transmit the agreed value on the {\em selective broadcast} channel; the remaining $n-3t-1$ nodes agree on a majority vote of these transmissions.
\item The number of levels of recursion in our algorithm is $\log n$, different from
that in the original {\em Modular Algorithm} \cite{Modular_n2_Coan_1992}.
\end{itemize}

Now, we briefly present the framework from \cite{Modular_n2_Coan_1992} and its
variation for the {\em selective broadcast} model.

\subsection{Modular Framework \cite{Modular_n2_Coan_1992}}

The framework proposed by Coan and Welch \cite{Modular_n2_Coan_1992} consists of recursive applications of two transformations, BC2BCB and BCB2BC: the first transformation, namely BC2BCB, uses a Byzantine Consensus (BC) algorithm to solve a Byzantine Committee Broadcast (BCB) problem, and the latter transformation, namely BCB2BC, uses a BCB algorithm to solve the BC problem.

The base case of the Modular Framework is a previously proposed BC algorithm (e.g., \cite{psl_BG_1982, gradecast_benor_2010}), say $A_0$. Then, the recursive definition of the Modular Framework is as follows: Given algorithm $A_{i-1}$ ($i \geq 1$), $A_i$ is defined as BCB2BC(BC2BCB($A_{i-1}$)). For brevity, we omit the details of the framework; the reader is referred to the prior work \cite{Modular_n2_Coan_1992}. The main reason that the Modular Framework achieves low message complexity is as that
the number of message transmissions induced by the expensive base algorithm is small. This is achieved by recursively dividing the nodes into many ``committees'' of small size such that the base algorithm is only executed within each of these small committees.

\paragraph{Algorithm 2:}

Modular Algorithm \cite{Modular_n2_Coan_1992} is designed for the Byzantine Consensus (BC) problems, wherein each node has an input. To perform Byzantine Broadcast (BB) using a modified Modular algorithm, the source node first broadcasts its input value on
the {\em selective broadcast} channel, and then a modified Modular algorithm is used to reach consensus on the value received by all the peers from the source . Now, we describe two modifications of Modular Algorithm to achieve BC efficiently in the {\em selective broadcast} model.

In the Modular Algorithm, when fault-free nodes transmit, they always transmit the same message to all the intended receivers. Thus, replacing these transmissions by a {\bf single} transmission in {\em selective broadcast} channel reduces the message complexity.

The second modification further exploits the reliable broadcast channel. Unlike Modular Algorithm, Algorithm 2 only requires $n' = 3t+1$ {\em active} nodes participating in the algorithm, i.e., executing Algorithm 2, due to the existence of {\em selective broadcast} channel. The
remaining $n-3t-1$ nodes (which we call {\em passive} nodes) do not transmit messages at all, but listen to the messages announcing the agreed value, transmitted by any $2t+1$ {\em active} nodes. The passive nodes then use majority voting on these $2t+1$ values to decide on their output.

\subsection{Message Complexity}

Suppose that $M_*(n')$ is the message complexity of the base algorithm executed by $n'$ nodes.
For any fixed integer $B$ such that $t+1 \geq B \geq 2$, by using analysis similar to that in \cite{Modular_n2_Coan_1992}, we can show the upper bound below on message complexity of $A_i$, denoted as $M_i$, for all $\log_{B} t \geq i \geq 0$. In the inequality below, $\alpha$ is a certain constant that depends
on the transformation used in the Modular Algorithm.

\begin{equation}
\label{eq:Ai}
M_i(3t+1) \leq B^i M_*(3t/B^i + 1) + \alpha B t i~,
\end{equation}

Replacing $i = \log_{B} t$ and $M_*(n') = O((n')^3)$ in equation \ref{eq:Ai} yields $M_{\log_{B} t}(3t+1) \leq O(t \log t)$. \footnote{Here, we use the Gradecast-based algorithm \cite{gradecast_benor_2010} as the base algorithm.}

The message complexity of Algorithm 2 is $M_{\log_{B} t}(3t+1) + 2t+1$, and is thus bounded by $O(t \log t)$.

\comment{
\begin{equation*}
M(\text{Algorithm 2}) = M_{\log_{B} t}(3t+1) \leq O(t \log t)
\end{equation*}
}

\section{Lower Bounds}
\label{sec:LB}

In this section, we state some simple lower bounds on bit and message
complexity. In deriving these bounds, we assume that the nodes only communicate explicitly through messages. That is, no implicit communication mechanism is used to convey information, such as the time between two message transmissions.

\paragraph{Lower Bound on Message Complexity:}

\begin{theorem}
The lower bound on message complexity of Byzantine Broadcast under the {\em selective broadcast} model is $\Omega(t)$.
\end{theorem}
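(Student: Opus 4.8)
The plan is to prove the bound by contradiction, using an indistinguishability (``two--worlds'') argument tailored to the \emph{message count} rather than to the communication graph. This distinction matters here because in the selective broadcast model a single transmission can reach every node, so the number of transmissions and the number of edges of the induced communication graph differ sharply; standard connectivity arguments (which give $\Omega(t)$ degree in the point-to-point model) overcount and must be avoided. First I would fix two candidate inputs for the source, say $0$ and $1$, and let $W_0$ and $W_1$ be the failure--free executions with these inputs. By Validity, in $W_b$ every fault--free peer outputs $b$. Suppose, toward a contradiction, that the worst--case number of transmissions is at most $t-1$; in particular at most $t-1$ distinct nodes ever transmit in each of $W_0$ and $W_1$.

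Next I would \emph{glue} these two executions. I would partition the peers into two nonempty groups $P$ and $Q$ and construct a hybrid execution $W^{\ast}$ in which the source is faulty and, together with a small set of corrupted ``bridge'' nodes, presents to every peer in $P$ exactly the view it had in $W_0$ and to every peer in $Q$ exactly the view it had in $W_1$. Concretely, the faulty source sends its $W_0$--messages selectively to $P$ and its $W_1$--messages selectively to $Q$ (the selective broadcast model permits precisely this mismatching behavior), while each node whose transmission is received across the cut, i.e.\ heard by a peer on the opposite side from the world it belongs to, is also corrupted and made to deliver the locally consistent message. Because each group's entire incoming view then matches its own world, every fault--free peer in $P$ decides $0$ and every fault--free peer in $Q$ decides $1$, violating Consistency.

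This attack is permissible only if the total number of corrupted nodes, namely the source plus the distinct senders whose messages cross the cut, is at most $t$. Under the assumption that at most $t-1$ nodes transmit at all, the number of distinct cross--cut senders is at most $t-1$, so the adversary stays within its budget of $t$ faults (the source plus at most $t-1$ bridges, and $t < n/3$). The resulting contradiction forces at least $t$ transmissions in the worst case, i.e.\ message complexity $\Omega(t)$.

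The step needing the most care, and the main obstacle, is the broadcast feature itself: an honest node's broadcast in $W_0$ or $W_1$ is heard on \emph{both} sides of any cut, so it automatically crosses and its sender must be corrupted and forced to broadcast the side--appropriate value in $W^{\ast}$. I would therefore choose the partition $P,Q$ so as to minimize the number of such cross--cut senders while simultaneously keeping at least one fault--free peer on each side after the source and the bridge nodes are charged against the fault budget; this is possible because there are $n-1 > t$ peers, of which at least $n-t \geq 2t+1$ are fault--free non--bridge nodes and can be split across the two sides. Making this accounting tight, so that the bridge set, the source, and the two nonempty fault--free remainders all fit within the $t$--fault budget, is the crux; once it is established, the $\Omega(t)$ bound follows immediately.
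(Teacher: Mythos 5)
Your overall strategy---corrupt the source plus every node that must ``speak differently'' to the two sides of a cut, and argue that a small message count keeps the adversary within its budget of $t$ faults---is the same idea as the paper's second (alternative) argument, which states in one sentence that if at most $t$ messages are sent, all transmitters may be faulty, making agreement impossible. However, the accounting step that you yourself flag as the crux has a genuine gap. Write $T_0$ and $T_1$ for the sets of nodes that transmit in $W_0$ and $W_1$ respectively. The set of nodes that must be corrupted in the hybrid execution $W^{\ast}$ is, in the worst case, the \emph{union} $T_0 \cup T_1$ together with the source: a peer that is silent in its ``own'' world but transmits in the other world must also be corrupted, since the opposite side expects to hear from it, and a peer that transmits in both worlds must be corrupted whenever its two transmissions differ. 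Your hypothesis bounds each of $|T_0|$ and $|T_1|$ by $t-1$, but the two sets need not coincide, so $|T_0 \cup T_1|$ can be as large as $2(t-1)$; with the source this is up to $2t-1 > t$ corruptions, exceeding the fault budget. The claim ``the number of distinct cross-cut senders is at most $t-1$'' therefore does not follow from the assumption as stated.

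The gap is repairable in two ways, and either way the $\Omega(t)$ conclusion survives. (i) Weaken the contradiction hypothesis to message complexity at most $(t-1)/2$; then $|T_0 \cup T_1 \cup \{p_1\}| \leq t$, the gluing goes through (there remain at least $n-t \geq 2t+1$ fault-free silent peers to split into two nonempty groups), and you conclude that the complexity exceeds $(t-1)/2$, losing only a constant factor. (ii) Corrupt adaptively: the adversary corrupts a peer only at the instant it is about to transmit in either simulated world, and upon corruption that peer does transmit in $W^{\ast}$; hence the number of corruptions never exceeds the number of messages of $W^{\ast}$ itself, an execution to which the assumed complexity bound also applies, so the budget is never exhausted and the full bound of $t$ is recovered. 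Note finally that the paper's primary argument is different and sidesteps this accounting entirely: each selective-broadcast transmission can be simulated by $n-1$ point-to-point messages, so an algorithm with $o(t)$ messages in this model would yield an $o(nt)$-message Byzantine Broadcast algorithm in the point-to-point model, contradicting the known $\Omega(nt)$ lower bound of \cite{dolev_msg_complexity}.
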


\begin{proof}
The proof is by contradiction.
Assume that there exists a correct algorithm A under the {\em selective broadcast} model that has message complexity of $o(t)$.
The transmission of a message by a certain node on the {\em selective broadcast} channel
can be simulated by sending a copy of the message to each of the remaining $n-1$ nodes
in a fully connected point-to-point network. Thus, algorithm A can solve the Byzantine
Broadcast problem using $o(nt)$ messages in a point-to-point model. This contradicts
the lower bound in \cite{dolev_msg_complexity}.

Alternatively, the theorem can be proved by arguing that more than
$t$ messages must be sent in the worst-case: if only $t$ (or fewer) messages are sent,
it is possible that the transmitters (at most $t$) of all these messages are faulty nodes, making it impossible to
guarantee agreement among the fault-free nodes. 
\end{proof}

The gap between the above lower bound of $\Omega(t)$ and the upper bound
of $O(t \log t)$ in Section \ref{sec:alg-tlogt} is small, but the problem of closing this gap remains open.

\paragraph{Lower Bound on Bit Complexity:}

It should be obvious that the total number of bits transmitted in any BB algorithm is at least $L$ (in the worst case), since there are $2^L$ possible input values. Thus, $\Omega(L)$ is a trivial lower bound on bit complexity, and our algorithm presented in Section \ref{sec:alg-L} matches this bound when $L$ is large enough.

\paragraph{Lower Bound on Bit Complexity of ``Static'' Algorithms
for Detectable Broadcast:}


An algorithm A is characterized by a set of {\em schedules} where each schedule consists of a sequence of transmission slots. In each slot $i$, a single node is selected as the {\em transmitter}, denoted as $T_i$. The transmitter $T_i$ transmits a message via the {\em selective broadcast} channel to all the other nodes. An algorithm is said to be {\bf static} if it has a fixed schedule, such that the transmitters in all the slots are pre-determined and are independent of the source's input and the behavior of the faulty nodes. 


We state the following lower bound on bit complexity of
{\em static} algorithms for {\em Detectable Broadcast} (DB). 
For lack of space, we only present the case when no two nodes are in dispute, and the proof is omitted. The proof argues that
in a static schedule, the source $p_1$ must transmit at least $L$ bits,
and then argue that the remaining nodes must transmit at least
$(n-1)\frac{L}{n-f}$ bits to satisfy the conditions of {\em Detectable
Broadcast}.

\begin{claim}

The lower bound on bit complexity of static DB algorithm is \footnote{Note that this lower bound is also a lower bound of static BB algorithms, since any BB algorithm also solves DB problems. Note that algorithms in \cite{psl_BG_1982, Modular_n2_Coan_1992, gradecast_benor_2010} are static BB algorithms, but the algorithm presented in Section \ref{sec:alg-L} is not static.}

\begin{equation*}
\label{eq:equality_check}
L + (n-1) \frac{L}{n-f}
\end{equation*}
\end{claim}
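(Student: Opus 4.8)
The plan is to bound separately the bits sent by the source $p_1$ and the bits sent by the $n-1$ peers, since in a static schedule these two sets of transmissions occupy disjoint, pre-determined slots and therefore add. I would show the source contributes at least $L$ and the peers contribute at least $(n-1)L/(n-f)$, which sums to the claimed bound. Throughout I treat $f$ as the number of faulty nodes and work entirely in the no-dispute case.

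For the source, I would first record that in the all--fault-free execution the Detectable Broadcast conditions must be met through branch (ii): with no faulty node present there is nothing to detect, so every fault-free peer outputs $z_i = x$. Now suppose the source transmitted at most $L-1$ bits over all inputs. Because the schedule is static, the source's behavior is just the concatenation of the bits it places in its fixed slots, so there are at most $\sum_{k=0}^{L-1}2^k = 2^L-1 < 2^L$ distinct transmission strings; by pigeonhole two distinct inputs $x \neq x'$ induce identical source transmissions. In the all--fault-free execution each peer's message is a deterministic function of what it has received, so a short induction on the slots shows that all peer transmissions, and hence every output, coincide across the two executions, contradicting that $z_i$ must equal $x$ in the first and $x'$ in the second. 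Thus the source sends at least $L$ bits.

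For the peers, the heart of the argument is the claim that \emph{every} subset $T$ of peers with $|T| = n-f$ must collectively transmit at least $L$ bits. Granting this, let $B_j$ denote the worst-case bits sent by peer $p_j$; double counting the $\binom{n-1}{n-f}$ such subsets (each peer lies in $\binom{n-2}{n-f-1}$ of them) gives
\[
\binom{n-2}{n-f-1}\sum_{j=2}^{n}B_j \;\geq\; \binom{n-1}{n-f}\,L,
\]
and simplifying the binomial ratio yields $\sum_{j}B_j \geq \tfrac{n-1}{n-f}\,L$, as required. To prove the subset claim I would fix $T$ and consider the execution in which the source is faulty (consuming one fault) and the $f-1$ peers outside $T$ are also faulty, so that the fault-free peers are \emph{exactly} those in $T$; the faulty nodes then feed $T$ no reliable information about the source's intended value except through $T$'s own transmissions. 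If the peers in $T$ sent fewer than $L$ bits, a counting argument produces two source inputs that are indistinguishable to the peers in $T$, and—anchored against the two all--fault-free executions above, which must output the two distinct values with no detection—this forces the fault-free peers in $T$ either to disagree or to contradict validity, violating Detectable Broadcast.

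The main obstacle is this last indistinguishability step. Two difficulties arise from the \emph{selective broadcast} feature. First, a faulty peer outside $T$ may send mismatching strings to different fault-free peers in $T$, so the received transcript is not common to all of $T$ and the counting must be carried out per fault-free peer rather than on a single shared transcript. Second, the specification offers a detection escape hatch—branch (i)—that the adversary's scenario must be shown not to trigger spuriously; I would neutralize this by chaining through the all--fault-free executions, in which branch (i) is impossible because no faulty node exists whose misbehavior could be detected, so any honest peer that would ``detect'' in the confusion scenario would also detect in an honest one. Making the hybrid construction and the per-peer counting fully rigorous while respecting the no-dispute assumption is the delicate part; the remaining double counting is the routine computation displayed above.
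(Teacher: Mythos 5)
Your overall skeleton matches the paper's: lower-bound the source's bits by a pigeonhole argument (at least $L$ bits, else two distinct inputs produce identical fault-free executions, contradicting validity), then reduce the peer bound to a claim that $(n-f)$-subsets of peers must jointly transmit at least $L$ bits. Your double counting over all $\binom{n-1}{n-f}$ subsets is just the direct form of the paper's contrapositive averaging step (the paper assumes the peer total is below $(n-1)\frac{L}{n-f}$ and extracts one subset jointly sending fewer than $L$ bits), so that reduction is equivalent; one caveat is that your sum of \emph{per-peer worst-case} bits $\sum_j B_j$ satisfies $\sum_j B_j \geq \max_E \sum_j b_j(E)$, so bounding it from below says nothing about the worst-case total unless each peer's transmission lengths are execution-independent -- which holds for static schedules but should be said explicitly.

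The genuine gap is in the subset claim itself, which is the heart of the proof and which you explicitly defer as ``the delicate part.'' Two things are off. First, the counting does not produce ``two source inputs that are indistinguishable to the peers in $T$'': the peers in $T$ receive the input directly from the source, so distinct inputs are always distinguishable to them. What fewer-than-$L$ joint bits actually gives is two inputs $u \neq v$ under which the peers in $T$ \emph{transmit} exactly the same bits in the same slots of the honest executions. Second, nothing in your scenario forces disagreement: if all of $T$ receives the same value, they simply agree. The paper's construction, which never appears in your proposal, is to partition $T$ into two non-empty groups $\mathcal{V}_1,\mathcal{V}_2$, have the faulty source send $u$ to $\mathcal{V}_1$ and $v$ to $\mathcal{V}_2$, and have the $f-1$ faulty peers behave toward $\mathcal{V}_1$ as in the honest $u$-execution and toward $\mathcal{V}_2$ as in the honest $v$-execution. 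Because $T$'s own transmissions coincide under $u$ and $v$, each peer in $\mathcal{V}_1$ sees exactly its view in the honest $u$-execution (and similarly for $\mathcal{V}_2$ with $v$); hence no fault-free peer detects misbehavior, $\mathcal{V}_1$ outputs $u$, and $\mathcal{V}_2$ outputs $v$, contradicting Detectable Broadcast. Note also that the mismatching (selective) transmissions by faulty nodes, which you list as an \emph{obstacle} to your counting, are in the paper's proof the adversary's \emph{tool}: the counting is carried out entirely over honest executions, where no mismatching occurs, and the two-faced behavior is used only in the constructed contradiction execution.
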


\section{Summary}

This paper introduces a new communication model that ``interpolates'' between two old models in the literature. In particular, we explore the impact of allowing nodes to {\em select} between broadcast and unicast on bit and message complexity. In the new model, we present a Multi-Valued BB algorithm that is order-optimal in bit complexity, and another BB algorithm that is efficient in message complexity. At last, we briefly discuss about lower bounds on bit and message complexity in the new model.

\comment{ ======== Old =========
We start with the following claim due to pigeonhole principle and the assumption of {\em fixed schedule}.

\begin{claim}
In any static DB algorithm, the source must send at least $L$ bits.
\end{claim}

Then, using the transformation algorithm, where nodes perform simulations on each others assuming a fault-free system, we have the following claim.

\comment{
we have the following claim based a transformation that has the source  perform simulation.

\begin{claim}
Any static DB algorithm can be transformed into an algorithm, within which, the source sends $L$ bits {\bf first}, i.e., the first transmitter in the algorithm is the source, whose message is $L$ bits long, without increasing bit complexity.
\end{claim}

Finally, we have the following claim based on a transformation that has the peers  perform simulations and has the source remain silent after the first transmission.
}

\begin{claim}
\label{claim:transform}
Any static DB algorithm can be transformed into a new algorithm, where the source sends {\bf only} $L$ bits, without increasing bit complexity.
\end{claim}


The new algorithm has a two-phase framework of interest. In the first phase, the source sends $L$ bits and remains silent afterward, and in the second phase, the peers compare the value received from the source with other peers. Based on this framework, we are able to show the following key lemma.

\begin{lemma}
\label{lemma:lower_bound_BB-AD}
The lower bound on bit complexity of static DB algorithm is \footnote{Note that this lower bound is also a lower bound of static BB algorithms, since any BB algorithm also solves DB problems. Note that algorithms in \cite{psl_BG_1982, Modular_n2_Coan_1992, gradecast_benor_2010} are static BB algorithms, but the algorithm presented in Section \ref{sec:alg-L} is not static.}

\begin{equation*}
\label{eq:equality_check}
L + (n-1) \frac{L}{n-f}
\end{equation*}
\end{lemma}

\begin{proof}
Apparently, the bit complexity of the first phase is $L$. Hence, we only need to show that the bit complexity is $(n-1) \frac{L}{n-f}$ for the second phase.

Given any correct algorithm A, construct the new transformed algorithm A', as suggested in Claim \ref{claim:transform}. Suppose by way of contradiction that in the second phase of A', the message complexity is strictly less than $(n-1) \frac{L}{n-f}$. Based on simple algebra, we know that there must be some $n-f$ peers such that jointly they send strictly less than $L$ bits in the second phase of A'. Denote these $n-f$ peers $\scriptv$. Then, due to the pigeonhole principle and the assumption of {\em fixed schedule}, we have the following claim: 


\comment{
\begin{claim}
\label{claim:less_than_L}
There must be some $n-f$ peers such that jointly they send strictly less than $L$ bits in the second phase of A'.
\end{claim}
}


\begin{claim}
\label{claim:same_behavior}

There must be two distinct values $u$ and $v$ such that each node in $\scriptv$ sends exactly the same set of bits in exactly the same slots in the second phase of A' in the following two situations:

\begin{itemize}
\item All nodes in $\scriptv$ received $u$ from the source in the first phase.
\item All nodes in $\scriptv$ received $v$ from the source in the first phase.
\end{itemize}
\end{claim}

Consider the case when the source is faulty, $n-f$ peers in $\scriptv$ are fault-free and the other $f-1$ peers are also faulty. Denote $f-1$ faulty peers $\scriptf'$. Divide $\scriptv$ into two arbitrary  non-empty sets, $\scriptv_1$ and $\scriptv_2$. Let $u_1$ and $u_2$ be two values sent by the source such that peers in $\scriptv$ have exactly the same behavior, as indicated by Claim \ref{claim:same_behavior}. Now, we describe the exact execution that fails A'.

\begin{enumerate}
\item the faulty source sends $u_1$ to $\scriptv_1$ and $u_2$ to $\scriptv_2$.
\item nodes in $\scriptv$ execute A'.
\item nodes in $\scriptf'$ behave to $\scriptv_1$ pretending that they have received $u_1$ from the source, and behave to $\scriptv_2$ pretending that they have received $u_2$ from the source. This behavior is possible, since nodes in $\scriptf'$ are all faulty.
\end{enumerate}

It is not hard to see that in this case, no fault-free node would detect misbehavior. Then in the end of A', nodes in $\scriptv_1$ would decide $u_1$, since from their point of view, nodes in both $\scriptv_2$ and $\scriptf'$ behave as if they receive $u_1$ from the source (due to Claim \ref{claim:same_behavior}). Similarly, nodes in $\scriptv_2$ would decide $u_2$. This leads to a contradiction that A is a correct algorithm. Thus, Lemma \ref{lemma:lower_bound_BB-AD} holds.
\end{proof}

~

\comment{
\paragraph{Byzantine Broadcast}

It is easy to see that the lower bound of DB is also a lower bound of BB, since any BB algorithm also solves DB. Hence, Lemma \ref{lemma:lower_bound_BB-AD} implies

\begin{theorem}
The lower bound on bit complexity of static BB algorithms is
\begin{equation*}
L + (n-1) \frac{L}{n-f} = L \frac{2n-t-1}{n-t}
\end{equation*}
\end{theorem}
}

=================== Old ========================}

\bibliography{paperlist}

\end{document}